\newtheorem{theorem}{Theorem}
\newtheorem{lemma}{Lemma}
\theoremstyle{plain}
\newtheorem{corollary}{Corollary}
\theoremstyle{plain}
\theoremstyle{remark}
\def\BibTeX{{\rm B\kern-.05em{\sc i\kern-.025em b}\kern-.08em
    T\kern-.1667em\lower.7ex\hbox{E}\kern-.125emX}}
\begin{document}

\title{Constellation Design in OFDM-ISAC over Data Payloads: From MSE Analysis to Experimentation \\

}

\author{\IEEEauthorblockN{Kawon Han}
\IEEEauthorblockA{
\textit{University College London}\\
London, United Kingdom \\
kawon.han@ucl.ac.uk}
\and
\IEEEauthorblockN{Kaitao Meng}
\IEEEauthorblockA{
\textit{The University of Manchester}\\
Manchester, United Kingdom \\
k.meng@manchester.ac.uk}
\and
\IEEEauthorblockN{Alexandra Chatzicharistou}
\IEEEauthorblockA{
\textit{University College London}\\
London, United Kingdom \\
a.chatzicharistou@ucl.ac.uk}
\and
\IEEEauthorblockN{Christos Masouros}
\IEEEauthorblockA{
\textit{University College London}\\
London, United Kingdom \\
c.masouros@ucl.ac.uk}
}

\maketitle

\begin{abstract}
Orthogonal frequency division multiplexing (OFDM) is one of the most widely adopted waveforms for integrated sensing and communication (ISAC) systems, owing to its high spectral efficiency and compatibility with modern communication standards. This paper investigates the sensing performance of OFDM-based ISAC for multi-target delay (range) estimation under specific radar receiver processing schemes. An estimation-theoretic framework is developed to characterize sensing performance with random communication payloads. We establish the fundamental limit of delay estimation accuracy by deriving the closed-form expression of the mean-square error (MSE) achieved using matched filtering (MF) and reciprocal filtering (RF) receivers. The results show that, in multi-target scenarios, the impact of signal constellations on the delay estimation MSE differs across receivers: MF performance depends on the fourth moment of the zero-mean, unit-power constellation in the presence of multiple targets, whereas RF performance depends on its inverse second moment, irrespective of the number of targets. Building on this analysis, we present a ISAC constellation design under specific receiver architecture that brings a receiver-dependent flexible trade-off between sensing and communication in OFDM-ISAC systems. The theoretical findings are validated through simulations and proof-of-concept experiments, and also the sensing and communication performance trade-off is experimentally shown with the proposed constellation design.
\end{abstract}

\begin{IEEEkeywords}
Constellation, Cramér–Rao bound (CRB), integrated sensing and communication (ISAC), orthogonal frequency division multiplexing (OFDM).
\end{IEEEkeywords}

\section{Introduction}
Integrated sensing and communication (ISAC) has been attracting significant attention as a key enabler for next-generation wireless networks. Considerable research efforts have been devoted to realizing dual-functional radar–communication (DFRC) infrastructures, which allow sensing and communication functionalities to coexist within a unified system. In particular, achieving a flexible trade-off between sensing and communication performance is recognized as a critical requirement for ISAC systems, alongside the need for standard-compatible waveform designs \cite{luo2025isac}.

Among the candidate waveforms, orthogonal frequency division multiplexing (OFDM) remains a strong contender owing to its high spectral efficiency and widespread adoption in modern communication standards \cite{prasad2004ofdm}. Furthermore, recent studies have shown that cyclic-prefix OFDM (CP-OFDM) achieves the lowest range sidelobe levels for sensing, making it one of the most promising waveform choices for ISAC \cite{liu2025cp}. Building on these insights, OFDM-based ISAC systems enable efficient implementation while ensuring reliable joint sensing and communication performance.

Unlike conventional OFDM radar systems that employ optimized probing signals such as CAZAC sequences \cite{knill2021coded}, OFDM-based ISAC leverages communication data payloads, which are randomly drawn from a set of modulation constellation symbols. This alters the radar sensing performance, introducing a new trade-off between sensing and communication that depends on the modulation constellation. Recent works \cite{liu2025cp, liu2025uncovering} have presented analytical studies of the sidelobe levels of the ambiguity function with random ISAC signals, providing fundamental insights into how the modulation constellation impacts sensing performance. Also, \cite{geiger2025joint} provides the constellation-dependent detection probability under matched filtering (MF) in OFDM-ISAC. However, the ambiguity function characterizes radar performance only under MF, and thus does not directly capture performance under mismatched filtering techniques that mitigate sidelobes \cite{mcaulay1971optimal}.

Instead of the MF receiver, reciprocal filtering (RF) has been developed for OFDM-based radar sensing \cite{sturm2011waveform, wojaczek2018reciprocal}. RF suppresses target- and clutter-induced sidelobes by equalizing the data dependency at the receiver. However, this comes at the cost of a signal-to-noise ratio (SNR) loss compared to MF, which introduces a trade-off between sidelobe suppression and SNR. The work in \cite{keskin2025fundamental} provides a theoretical comparison of sidelobe levels between MF and RF outputs as a function of target SNR. Nevertheless, it focuses only on sidelobe levels and their impact on detection probability, without offering a rigorous analysis of multi-target range estimation performance. Therefore, a comprehensive estimation-theoretic study that characterizes the range estimation accuracy of OFDM-based ISAC under both MF and RF processing is still missing.

To address this gap, we develop a fundamental framework for OFDM-based ISAC using estimation-theoretic analysis under specific sensing receiver architectures. First, we present closed-form expressions for the mean-square error (MSE) of delay estimation with both MF and RF receivers, highlighting their dependence on the geometry of the modulation constellation and multi-target induced interference. Then, we formulate an ISAC constellation design problem for OFDM-ISAC systems with specific receiver processing, which brings new receiver-dependent trade-off between sensing and communication throughout geometric constellation shaping.The theoretical findings are validated through numerical simulations and proof-of-concept (PoC) experiments, thereby offering estimation-theoretic insights into OFDM-ISAC and demonstrating their practical relevance.

\section{System Model}
\subsection{Transmit Signal Model}
The ISAC transmit (TX) signal, utilizing \(N\) OFDM subcarriers, is modulated with random communication symbols drawn from the $M$-ary constellation set \(\mathcal{S} = \{s_1,s_2, ... ,s_M\}\). It is expressed as $\mathbf{x} = [x_0, x_1, \dots, x_{N-1}]$, where \( x_n \in \mathcal{S}, \forall n = 0,1, \dots, N-1 \). Without loss of generality, we assume that the constellation symbols are zero-mean and unit-variance as $\mathbb{E}\left[|x_n|^2\right] = 1, \forall x_n \in \mathcal{S}$, with statistical properties defined as follows:
\begin{align}
    \mathbb{E}\left[|x_n|^4\right] = \mu_{4}, \;\;\; 
    \mathbb{E}\left[|x_n|^{-2}\right]  = \nu_{-2}, 
    \quad \forall x_n \in \mathcal{S}, \label{eq1}
\end{align}
where $\mu_{4}$ denotes the fourth moment of the constellation corresponding to the kurtosis \cite{liu2025cp}, and $\nu_{-2}$ denotes the inverse second moment of the constellation \cite{han2025secure}. For a general $M$-ary PSK/QAM constellation, they are given by $\mu_{4} = \frac{1}{M}\sum_{m=1}^{M} |s_m|^4$ and $\nu_{-2} = \frac{1}{M}\sum_{m=1}^{M} |s_m|^{-2}$, respectively. This data-embedded signal is used for both communication and sensing. It is received by a communication user and is also reflected by targets, returning to the sensing receiver.

\subsection{Sensing System Model}
Suppose that $K$ target sources located at different ranges are sufficiently separated from each other, which is a common assumption in conventional radar systems. The frequency-domain received (RX) OFDM signal at the sensing receiver can be modeled as  
\begin{align}
    \mathbf{y}  = \mathbf{a}^T \mathbf{H} \mathbf{X} + \mathbf{z}, \label{RX_signal}
\end{align}
where $\mathbf{a} = [\alpha_{1}, \alpha_{2}, \dots, \alpha_{K}]^T \in \mathbb{C}^{K \times 1}$ denotes the complex amplitudes that incorporate the path loss and radar cross-section (RCS) of each target. The delay-channel matrix is expressed as $\mathbf{H} = [\mathbf{h}(\tau_1), \mathbf{h}(\tau_2), \dots , \mathbf{h}(\tau_K)]^T \in \mathbb{C}^{K \times N}$, where $\tau_{k}$ is the time-of-flight (TOF) from the ISAC transmitter to target $k$ and back to the receiver. The delay steering vector is defined as  
$\mathbf{h}(\tau) = \begin{bmatrix} 1, & e^{-j2 \pi \Delta f \tau}, & \dots, & e^{-j2 \pi (N-1)\Delta f \tau} \end{bmatrix}^T \in \mathbb{C}^{N \times 1},$ with subcarrier spacing $\Delta f = B/N$, where $B$ denotes the signal bandwidth. The transmitted signal $\mathbf{X}$ is the diagonal matrix of $\mathbf{x}$, i.e., $\mathbf{X} = \text{diag}(\mathbf{x})$. Finally, $\mathbf{z}$ denotes the additive white Gaussian noise (AWGN) at the sensing receiver, following $\mathbf{z} \sim \mathcal{CN}(0,\sigma^2 \mathbf{I}_N)$.

\section{Main Results}
In this section, we analyze the delay (or range) estimation performance of OFDM-ISAC systems under random communication signals via MSE of specific receivers; MF and RF.
\subsection{Delay Estimation MSE with Matched Filtering Receiver}
First, we derive the expected estimation error of the delay $\tau_k$ of source $k$ under MF receiver processing.
After matched filtering, i.e., multiplying $\mathbf{y}$ by $\mathbf{X}^H$, the output signal of MF $\mathbf{y}_{\text{MF}} = \mathbf{y}\mathbf{X}^H$ becomes
\begin{align}
    \mathbf{y}_{\text{MF}} = \mathbf{a}^T \mathbf{H} |\mathbf{X}|^2 + \mathbf{z}_{\text{MF}},
\end{align}
where $\mathbf{z}_{\text{MF}}$ follows the same noise characteristics as $\mathbf{z}$ due to assumption on unit-variance constellation. Let us formulate the general delay estimator that uses atoms in a set $\mathcal{A}=\{\mathbf{h}(\tau) \; | \; \tau \in \mathcal{T}\}$, where $\mathcal{T}$ denotes the delay candidate set with sufficient resolution. The inverse discrete Fourier transform (IDFT) is one efficient implementation of such estimator. This estimator can be defined as
\begin{align} \label{estimator_MF}
    \hat{\tau} = \arg \max_{\tau \in \mathcal{T}} \left| \mathbf{h}^H(\tau) \mathbf{y}_{\text{MF}}^T \right|.
\end{align}
Let us define $s_{\text{MF}}(\tau) = \mathbf{h}^H(\tau) \mathbf{y}_{\text{MF}}^T$, which is expressed as
\begin{equation}
    s_{\text{MF}}(\tau) = \sum_{k=1}^K \sum_{n=0}^{N-1} \alpha_k |x_n|^2 e^{j2\pi n \Delta f (\tau - \tau_k)} + \sum_{n=0}^{N-1} x_n^* z_n e^{j2\pi n \Delta f \tau},
\end{equation}
Based on this objective function \eqref{estimator_MF}, we define $f(\tau) = |s_{\text{MF}}(\tau)|^2$. Then, the following lemma approximates the delay estimation error in terms of $f(\tau)$.

\begin{lemma} \label{lemma1}
    Assuming that the source $k$ is the desired target and $\tau_k$ is its true delay, the biased error of the estimate $\tau_k$ is given by
    \begin{equation}
        \hat{\tau}_k = \tau_k - \frac{\dot{f}(\tau_k)}{\ddot{f}(\tau_k)}. \label{MSE1}
    \end{equation}
\end{lemma}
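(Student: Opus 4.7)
The plan is to obtain \eqref{MSE1} as a first-order linearization of the stationarity condition for the estimator in \eqref{estimator_MF}, treating the estimation error $\hat{\tau}_k-\tau_k$ as a small perturbation around the true delay $\tau_k$. This is the classical small-error / local Taylor argument used in deriving CRB-type accuracy expressions for maximum-output estimators, and it should go through cleanly here because $f(\tau)=|s_{\text{MF}}(\tau)|^2$ is smooth in $\tau$ (all atoms $\mathbf{h}(\tau)$ are entire functions of $\tau$).

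First I would note that since the estimator \eqref{estimator_MF} is equivalently $\hat{\tau}_k = \arg\max_{\tau\in\mathcal{T}} f(\tau)$ in a neighbourhood of $\tau_k$ (restricting attention to the lobe associated with source $k$, under the stated assumption of well-separated targets), and since $\mathcal{T}$ has sufficient resolution to approximate a continuum, $\hat{\tau}_k$ satisfies the first-order optimality condition
\begin{equation}
    \dot{f}(\hat{\tau}_k) = 0.
\end{equation}
Next I would perform a first-order Taylor expansion of $\dot{f}$ around the true delay $\tau_k$:
\begin{equation}
    0 \;=\; \dot{f}(\hat{\tau}_k) \;=\; \dot{f}(\tau_k) + \ddot{f}(\tau_k)\,(\hat{\tau}_k-\tau_k) + \mathcal{O}\!\left((\hat{\tau}_k-\tau_k)^2\right).
\end{equation}
Rearranging and discarding the higher-order remainder yields exactly \eqref{MSE1}.

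To justify the truncation, I would argue that in the regime of interest (sufficient SNR and well-separated targets so that the lobe around $\tau_k$ in $f(\tau)$ is a well-defined local maximum), the displacement $\hat{\tau}_k-\tau_k$ is small, so the quadratic remainder is negligible relative to the linear term. In the noiseless, single-target limit, $\tau_k$ is an exact maximiser of $f$, giving $\dot{f}(\tau_k)=0$ and therefore zero error; noise and multi-target interference perturb $\dot{f}(\tau_k)$ away from zero, and the linearized correction $-\dot{f}(\tau_k)/\ddot{f}(\tau_k)$ captures the resulting displacement. I would also remark that the estimator is biased because $\dot f(\tau_k)$ is not zero-mean in general once the random data and cross-target terms enter $s_{\text{MF}}$, which is exactly what \eqref{MSE1} asserts through the non-centered ratio.

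The main obstacle I anticipate is not the algebra, which is a one-line Taylor argument, but carefully delineating the domain of validity: one must ensure that the relevant local maximum of $f$ is the one corresponding to source $k$ (not leaked into an adjacent target's lobe or a noise-induced spurious peak), and that $\ddot{f}(\tau_k)<0$ with non-vanishing magnitude so that the ratio is well-defined. These conditions can be invoked via the separation assumption on $\{\tau_k\}$ made in Section~II-B and a standard high-SNR stipulation, after which the linearization \eqref{MSE1} follows immediately.
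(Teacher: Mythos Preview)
Your proposal is correct and is essentially the same argument as the paper's: the paper Taylor-expands $f(\tau)$ to second order around $\tau_k$, differentiates, and imposes the first-order optimality condition $\dot f(\hat{\tau}_k)=0$, which is exactly your first-order expansion of $\dot f$ about $\tau_k$. Your discussion of the validity conditions (well-separated lobes, $\ddot f(\tau_k)<0$) is a bit more explicit than the paper's, but the core idea is identical.
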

\renewcommand\qedsymbol{$\blacksquare$}
\begin{proof}
    Assuming that $f(\tau)$ has a sharp peak near $\tau_k$, a second-order Taylor expansion of $f(\tau)$ around $\tau_k$ yields
    \begin{align}
        f(\tau)  \approx f(\tau_k) + \dot{f}(\tau_k)(\tau - \tau_k) + \frac{1}{2} \ddot{f}(\tau_k)(\tau - \tau_k)^2. \label{Talyor}
    \end{align}
    The first derivative of \eqref{Talyor} in terms of $\tau$ is given by
    \begin{align}
        \dot{f}(\tau)  = \dot{f}(\tau_k) +  \ddot{f}(\tau_k)(\tau - \tau_k). \label{Talyor2}
    \end{align}
    Plugging the first-order optimality condition that $\dot{f}(\tau) \rightarrow 0$ as $\tau \rightarrow {\tau}_k$ into \eqref{Talyor2} yields \eqref{MSE1}, completing the proof.
\end{proof}
Now, we evaluate the MSE by taking the squared expectation over \eqref{MSE1}:
\begin{equation} \label{MSE2}
    \mathbb{E}[(\hat{\tau}_k - \tau_k)^2] = \mathbb{E}\left[ \left( \frac{\dot{f}(\tau_k)}{\ddot{f}(\tau_k)} \right)^2 \right]
    \approx \frac{\mathbb{E}[|\dot{f}(\tau_k)|^2]}{|\mathbb{E}[\ddot{f}(\tau_k)]|^2}.
\end{equation}
Such approximations are commonly used in the analysis of nonlinear estimators, and are widely accepted when the curvature of the objective function is stable and the estimator remains in a local neighborhood of the true parameter. 

To relate the MSE expression \eqref{MSE2} to $s_{\text{MF}}(\tau)$, we have the first derivative of $f(\tau)$ at the true delay $\tau_k$ as
\begin{align}
        \dot{f}(\tau_k)=2 \Re\{\dot{s}_{\text{MF}}(\tau_k) s^*(\tau_k)\} \approx 2N|\alpha_k|\Re\{\dot{s}_{\text{MF}}(\tau_k)\},
\end{align}
where $s^*(\tau_k) \approx N\alpha_k^*$ is the conjugate of the target peak amplitude. Thus, the expectation of its square can be rewritten as
\begin{align}
    \mathbb{E}[|\dot{f}(\tau_k)|^2] &= 4N^2|\alpha_k|^2\mathbb{E}[(\Re\{\dot{s}_{\text{MF}}(\tau_k)\})^2] \nonumber \\
    &=2N^2|\alpha_k|^2\mathbb{E}[|\dot{s}_{\text{MF}}(\tau_k)|^2].
\end{align}
Similarly, the second derivative is given by
\begin{align}
    \ddot{f}(\tau_k) &=2 \Re\{\ddot{s}_{\text{MF}}(\tau_k) s^*(\tau_k) + \dot{s}_{\text{MF}}(\tau_k) \dot{s}^*(\tau_k)\} \nonumber \\
    & \approx  2N|\alpha_k|\Re\{\ddot{s}_{\text{MF}}(\tau_k)\}.
\end{align}
Thus, the square of its expectation becomes 
\begin{align}
    |\mathbb{E}[\ddot{f}(\tau_k)]|^2 &= 4N^2|\alpha_k|^2|\mathbb{E}[\ddot{s}_{\text{MF}}(\tau_k)]|^2.
\end{align}
Substituting those into \eqref{MSE2} yields
\begin{equation}
        \mathbb{E}[(\hat{\tau}_k - \tau_k)^2] = \frac{\mathbb{E}[|\dot{s}_{\text{MF}}(\tau_k)|^2]}{2|\mathbb{E}[\ddot{s}_{\text{MF}}(\tau_k)]|^2}. \label{MSE3}
\end{equation}

Now, we are ready to derive a closed-form expression of the delay estimation MSE with MF receiver. The first derivative of $s_{\text{MF}}(\tau)$ at $\tau_k$ with zero-mean is written as 
\begin{align} \label{firstDev1}
    \dot{s}_{\text{MF}}(\tau_k)  =  (j2\pi \Delta f) & \underbrace{\left(\sum_{j \ne k}^K \sum_{n=0}^{N-1} n \alpha_j |x_n|^2 e^{j2\pi n \Delta f (\tau_k - \tau_j)}  \right.}_{\text{Sidelobe interference from other targets}, \;\mathrm{C}_{1}}   \nonumber \\ 
     & + \underbrace{\left. \sum_{n=0}^{N-1} n x_n^* z_n e^{j2\pi n \Delta f \tau_k}\right)}_{\text{Noise}, \;\mathrm{C}_{2}},
\end{align}
where we can observe that the first derivative of $s_{\text{MF}}(\tau)$ depends on the sidelobe of other targets and receiver noises. Again, the second derivative $s_{\text{MF}}(\tau)$ at $\tau_k$ is given by
\begin{align} \label{SecondDev1}
    \ddot{s}_{\text{MF}}(\tau_k) = (j2\pi\Delta f)^{2} & \left(
    \sum_{j=1}^K \sum_{n=0}^{N-1} n^2\,\alpha_j\,|x_n|^2\,e^{j2\pi n\Delta f(\tau_k-\tau_j)}\;\right. \nonumber \\
    & \left.+\;\sum_{n=0}^{N-1} n^2\,x_n^*\,z_n\,e^{j2\pi n\Delta f\tau_k}\right).
\end{align}
With \eqref{firstDev1} and \eqref{SecondDev1} on hands, we get the following theorem for delay estimation MSE under MF receiver.
\begin{theorem}\label{theo2}
A closed-form expression for the delay estimation MSE for target $k$ with the MF receiver and the conventional delay estimator is given by
\begin{align}
    \mathrm{MSE}_{\text{MF},k} = \frac{3\left((\mu_{4}-1)\sum_{j\neq k}^K|\alpha_{j}|^2 + \sigma^2\right)}
     {8\pi^2\Delta f^2 |\alpha_{k}|^2 N^3}. \label{Theo_eq2}
\end{align}
\end{theorem}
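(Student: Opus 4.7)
The plan is to start from the compact MSE formula \eqref{MSE3} and evaluate its numerator and denominator separately using the expressions \eqref{firstDev1}--\eqref{SecondDev1} for $\dot{s}_{\text{MF}}(\tau_k)$ and $\ddot{s}_{\text{MF}}(\tau_k)$, taking expectations with respect to the i.i.d.\ zero-mean unit-variance data symbols $\{x_n\}$ and the AWGN samples $\{z_n\}$. The moment information \eqref{eq1} on the constellation, together with the independence of data and noise, will provide all the ingredients needed.

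For the denominator, I would first observe that the noise part of $\ddot{s}_{\text{MF}}(\tau_k)$ vanishes in expectation, and that $\mathbb{E}[|x_n|^2]=1$ reduces the target part to a deterministic sum $(j2\pi\Delta f)^2\sum_{j=1}^{K}\alpha_j\sum_{n=0}^{N-1} n^{2}e^{j2\pi n \Delta f(\tau_k-\tau_j)}$. The $j=k$ term has no phase and yields $\sum_n n^2 \approx N^3/3$, while under the stated assumption that the targets are well separated, the oscillatory sums for $j\neq k$ scale at most like $O(N)$ and are subdominant, so $|\mathbb{E}[\ddot{s}_{\text{MF}}(\tau_k)]|^{2}\approx (2\pi\Delta f)^{4}|\alpha_k|^{2}N^{6}/9$. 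For the numerator, I would split $\dot{s}_{\text{MF}}(\tau_k)$ into the inter-target term $\mathrm{C}_1$ and the noise term $\mathrm{C}_2$, which are uncorrelated. The noise part is immediate: $\mathbb{E}[|\mathrm{C}_2|^{2}]=\sigma^{2}\sum_n n^{2}\approx \sigma^{2}N^{3}/3$. For $\mathbb{E}[|\mathrm{C}_1|^{2}]$, the key identity is $\mathbb{E}[|x_n|^{2}|x_{n'}|^{2}]=1+(\mu_{4}-1)\delta_{n,n'}$; the constant-$1$ piece factorises into products of sums $\sum_n n\,e^{j2\pi n\Delta f(\tau_k-\tau_j)}$ which stay $O(N)$ for separated targets, whereas the Kronecker-$\delta$ piece collapses the double sum and, after again discarding oscillatory cross-target ($j\neq j'$) contributions, yields the leading term $(\mu_4-1)\sum_{j\neq k}|\alpha_j|^{2}\,N^{3}/3$.

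Plugging these two estimates into \eqref{MSE3}, the $(2\pi\Delta f)^{2}|\alpha_k|^{2}$ factors partially cancel and the ratio $(N^{3}/3)/(N^{6}/9)=3/N^{3}$ produces the $3/(8\pi^{2}\Delta f^{2}|\alpha_k|^{2}N^{3})$ prefactor of \eqref{Theo_eq2}, while the bracket $(\mu_4-1)\sum_{j\neq k}|\alpha_j|^{2}+\sigma^{2}$ emerges from the numerator. The main obstacle, and where the argument has to be made carefully, is justifying that all cross-subcarrier and cross-target oscillatory sums weighted by $n$ or $n^{2}$ are $o(N^{3})$, so that only the diagonal $n=n'$ and $j=j'=k$ (respectively $j=j'\neq k$) contributions survive in the large-$N$ regime; this step is exactly where the "sufficiently separated targets" assumption from Section~II.B does its work.
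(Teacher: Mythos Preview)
Your proposal is correct and follows essentially the same route as the paper's proof: both start from \eqref{MSE3}, split $\dot{s}_{\text{MF}}(\tau_k)$ into the uncorrelated terms $\mathrm{C}_1$ and $\mathrm{C}_2$, use the moment identity $\mathbb{E}[|x_n|^{2}|x_{n'}|^{2}]=1+(\mu_4-1)\delta_{n,n'}$ (the paper arrives at the equivalent $\mu_4\sum_n n^2 + |\sum_n n e^{j\phi_{j,n}}|^2 - \sum_n n^2$), discard the oscillatory $j\neq j'$ and off-diagonal contributions under the separated-targets assumption, and finish with $\sum_{n}n^{2}\approx N^{3}/3$. Your treatment of the denominator and of which terms are $o(N^{3})$ is in fact a bit more explicit than the paper's, but the structure and the key approximations are the same.
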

\renewcommand\qedsymbol{$\blacksquare$}
\begin{proof}
Please refer to Appendix \ref{proof_theo2}.
\end{proof}

It is observed that the delay estimation MSE with the classic MF receiver is affected by the sidelobes of other delay sources, whose magnitudes depend on the constellation and its fourth moment. This performance trend is consistent with the ambiguity function analysis in \cite{liu2025cp} and the effective signal-to-noise-plus-interference ratio (SINR) analysis in \cite{geiger2025joint}. Moreover, it implies that the MF receiver with conventional estimators such as FFT-based estimator or subspace-based MUSIC may not achieve the CRB, even under high SNR and a sufficiently large number of samples, when $\mu_4 > 1$ and $K > 1$. It is worth noting that these receivers remain efficient when a unit-amplitude constellation with $\mu_4=1$ is employed, or when only a single target exists ($K=1$).

\subsection{Delay Estimation MSE with Reciprocal Filtering Receiver}
We investigate a mismatched filtering receiver, which suppresses the sidelobes of ambiguity function (AF) by sacrificing SNR. Particularly, we focus on RF known for equalizing data sequences in random ISAC signaling \cite{wojaczek2018reciprocal}. The RF receiver is implemented by element-wise division of $\mathbf{y}$ by $\mathbf{x}$, of which output $\mathbf{y}_{\text{RF}} = \mathbf{y} \oslash \mathbf{x}$ is represented as
\begin{align}
    \mathbf{y}_{\text{RF}} = \mathbf{a}^T \mathbf{H}  + \mathbf{z}_{\text{RF}}.
\end{align}
It is observed that the output of RF has no effect of random signaling in the signal term, implying that it is free from the sidelobes of other delay sources. Notably, $\mathbf{z}_{\text{RF}}$ is zero-mean, but its variance is reshaped due to RF, which is given by \cite{han2025secure}
\begin{align}
    \mathrm{Var}(z_{\text{RF},n}) = \mathbb{E}\!\left[\left|z_n x_n^{-1}\right|^2\right]
    = \sigma^2 \,\mathbb{E}\bigl[|x_n|^{-2}\bigr] = \sigma^2\,\nu_{-2}.
\end{align}
Similar to MF receiver, the delay estimator in RF receiver is defined as
\begin{align} \label{estimator}
    \hat{\tau} = \arg \max_{\tau \in \mathcal{T}} \left| \mathbf{h}^H(\tau) \mathbf{y}_{\text{RF}}^T \right|, 
\end{align}
where $\mathbf{h}(\tau) \in \mathcal{A}$. By defining $s_{\text{RF}}(\tau) = \mathbf{h}^H(\tau) \mathbf{y}_{\text{RF}}^T$, we have
\begin{equation}
    s_{\text{RF}}(\tau) = \sum_{k=1}^K \sum_{n=0}^{N-1} \alpha_k e^{j2\pi n \Delta f (\tau - \tau_k)} + \sum_{n=0}^{N-1} z_{\text{RF},n} e^{j2\pi n \Delta f \tau}.
\end{equation}
Based on Lemma \ref{lemma1}, the following theorem provides a closed-form expression for the delay estimation MSE with RF receiver.
\begin{theorem}\label{theo3}
A closed-form expression for the delay estimation MSE for target $k$ with the RF receiver and the conventional delay estimator is given by
\begin{align}
    \mathrm{MSE}_{\text{RF},k} = \frac{3 \sigma^2 \nu_{-2}}
     {8\pi^2\Delta f^2 |\alpha_{k}|^2 N^3}. \label{Theo_eq3}
\end{align}
\end{theorem}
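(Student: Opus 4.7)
The plan is to mirror the proof of Theorem \ref{theo2}: substitute $s_{\text{RF}}(\tau)$ into the generic MSE expression \eqref{MSE3} and evaluate the two expectations that appear there. The key simplification, absent in the MF case, is that the signal portion of $s_{\text{RF}}(\tau)$ is fully deterministic because the reciprocal filter equalizes every random payload symbol. All randomness in the derivatives $\dot{s}_{\text{RF}}(\tau_k)$ and $\ddot{s}_{\text{RF}}(\tau_k)$ is therefore confined to the noise terms, whose per-subcarrier variance is reshaped to $\sigma^{2}\nu_{-2}$ by the RF operation.

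First I would differentiate $s_{\text{RF}}(\tau)$ term-by-term at $\tau=\tau_k$, yielding
\begin{equation*}
\dot{s}_{\text{RF}}(\tau_k) = (j2\pi\Delta f)\sum_{j=1}^{K}\alpha_j \sum_{n=0}^{N-1} n\, e^{j2\pi n \Delta f(\tau_k-\tau_j)} + (j2\pi\Delta f)\sum_{n=0}^{N-1} n\, z_{\text{RF},n}\, e^{j2\pi n \Delta f\tau_k},
\end{equation*}
and similarly for $\ddot{s}_{\text{RF}}(\tau_k)$ with $n\mapsto n^{2}$ and an extra $(j2\pi\Delta f)$ factor. Because the signal sum is entirely deterministic, the only stochastic contribution to $\mathbb{E}[|\dot{s}_{\text{RF}}(\tau_k)|^{2}]$ comes from the noise sum, which evaluates to $(2\pi\Delta f)^{2}\sigma^{2}\nu_{-2}\sum_{n} n^{2}\approx (2\pi\Delta f)^{2}\sigma^{2}\nu_{-2}N^{3}/3$. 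The self-target deterministic piece ($j=k$) contributes nothing to $\dot{f}(\tau_k)=2\Re\{\dot{s}_{\text{RF}}(\tau_k)\, s^{*}(\tau_k)\}$ because $j\alpha_k\cdot N\alpha_k^{*}\sum_n n$ is purely imaginary, exactly as in Theorem \ref{theo2}, while the cross-target pieces ($j\neq k$) are deterministic sidelobes negligible under the well-separated target assumption.

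The expectation of the second derivative is handled by the same template: the noise averages to zero, the $j=k$ term contributes $-(2\pi\Delta f)^{2}\alpha_k \sum_{n} n^{2}\approx -(2\pi\Delta f)^{2}\alpha_k N^{3}/3$, and the $j\neq k$ terms are again discarded as small deterministic sidelobes. Squaring yields $|\mathbb{E}[\ddot{s}_{\text{RF}}(\tau_k)]|^{2}\approx (2\pi\Delta f)^{4}|\alpha_k|^{2} N^{6}/9$. Substituting these approximations into \eqref{MSE3} and simplifying the powers of $2\pi\Delta f$ and $N$ directly produces the claimed $3\sigma^{2}\nu_{-2}/(8\pi^{2}\Delta f^{2}|\alpha_k|^{2}N^{3})$.

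The main obstacle I anticipate is justifying rigorously that the deterministic inter-target contributions in both $\dot{s}_{\text{RF}}(\tau_k)$ and $\ddot{s}_{\text{RF}}(\tau_k)$ are indeed negligible compared with the in-beam $j=k$ terms. This hinges on the sufficiently-separated targets assumption stated in the system model and on the fact that the weighted Dirichlet-type kernels $\sum_{n=0}^{N-1} n^{p} e^{j2\pi n \Delta f(\tau_k-\tau_j)}$ grow markedly more slowly in $N$ than the in-beam sums $\sum_{n=0}^{N-1} n^{p}$ once $|\tau_k-\tau_j|$ exceeds a resolution cell. Everything else reduces to the identity $\sum_{n=0}^{N-1} n^{2}\approx N^{3}/3$ for large $N$ and routine bookkeeping of constants.
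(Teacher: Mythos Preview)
Your proposal is correct and follows essentially the same route as the paper's own proof: compute $\mathbb{E}[|\dot{s}_{\text{RF}}(\tau_k)|^{2}]$ and $\mathbb{E}[\ddot{s}_{\text{RF}}(\tau_k)]$, note that the signal component after reciprocal filtering is deterministic so only the noise (with reshaped variance $\sigma^{2}\nu_{-2}$) contributes to the numerator, discard the $j\neq k$ deterministic sidelobes under the well-separated assumption, substitute into \eqref{MSE3}, and finish with $\sum_{n=0}^{N-1}n^{2}\approx N^{3}/3$. If anything, your write-up is slightly more explicit than the paper's appendix in justifying why the $j=k$ self-target piece of $\dot{s}_{\text{RF}}(\tau_k)$ drops out of $\dot{f}(\tau_k)$.
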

\renewcommand\qedsymbol{$\blacksquare$}
\begin{proof}
Please refer to Appendix \ref{proof_theo3}.
\end{proof}
Unlike the delay estimation MSE of the MF receiver, the MSE of the RF receiver remains unaffected by the presence of other sources. Instead, it suffers an SNR loss proportional to the inverse second moment of the modulation constellation, implying that its MSE is always larger than the CRB unless a unit-amplitude constellation is employed. Therefore, the RF receiver also cannot achieve the lower bound performance, even at high SNR, when using modulation constellations with $\nu_{-2} > 1$.

\textbf{Remark 1:} Throughout the presented analysis, we observe that the MSEs of MF- and RF-based receivers are differently influenced by the modulation constellation and multi-target interference. For MF, the delay estimation performance depends on the fourth moment $\mu_4$ of the constellation and on the interference from other targets, quantified by $\sum_{j \neq k}^K |\alpha_{j}|^2$. This indicates that the sidelobes of other targets act as interference to the desired target, leading to degraded estimation performance. In contrast, for RF, the performance depends only on the inverse second moment $\nu_{-2}$ of the constellation, while the sidelobes induced by other targets are completely removed. Therefore, the performance comparison between MF and RF receivers must consider both the choice of modulation constellation and the number of targets or clutter sources present in the ISAC scenario.

\subsection{ISAC Constellation Design}
Building on the MSE analysis of OFDM-ISAC systems under different receiver structures, we propose a geometric constellation shaping (GCS) framework that provides a flexible trade-off between sensing and communication performance. To this end, we adopt the minimum Euclidean distance (MED) as the communication performance metric, motivated by the fact that the bit-error rate is dominated by errors occurring at the MED \cite{caire2002bit}. Accordingly, the ISAC constellation design problem is formulated as
\begin{subequations}\label{Eqn::P1}
    \begin{align}
    & \underset{\{s_m\}_{m=1}^M}{\text{minimize}}
    \qquad \rho f_s(\mathcal{S}) - (1-\rho) d_{\min} \\
    & \text{subject to} \qquad |s_i - s_j| \geq d_{\min}, \; \forall s_i \neq s_j \in \mathcal{S}, \\
    & \qquad \qquad \quad \mathbb{E}[s_i] = 0, \; \mathbb{E}[s_i^2] = 0, \; \mathbb{E}[|s_i|^2] = 1,
    \end{align}
\end{subequations}
where $\rho \in [0,1]$ denotes the priority weight between sensing and communication and $d_{\min}$ is the MED between symbols. The objective function involves the sensing performance metric $f_s(\mathcal{S})$, which depends on the receiver structure. Specifically, for the MF receiver, $f_s(\mathcal{S}) = \mu_4$ to minimize delay-estimation MSE, while for the RF receiver, $f_s(\mathcal{S}) = \nu_{-2}$ to reduce SNR loss, thereby minimizing the MSE. Since the above formulation leads to a non-linear optimization problem, we employ sequential quadratic programming (SQP) with multiple random initializations to obtain constellation solutions.

\section{Simulation and Experimental Validation}
In this section, we validate the proposed theory and closed-form expressions through numerical simulations and a proof-of-concept (PoC) demonstration. To examine the impact of the modulation constellation on sensing performance, we consider various constellations: QPSK, 16QAM, 64QAM, and a customized 32APSK with four rings, whose radius ratio is $1:2:3:4$. 

\subsection{Numerical Simulation Results}

\begin{figure}[t!]
    \centering
    \subfigure[]{\includegraphics[width=0.24\textwidth]{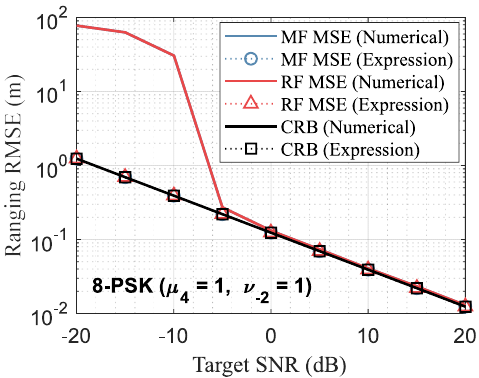}}
    \subfigure[]{\includegraphics[width=0.24\textwidth]{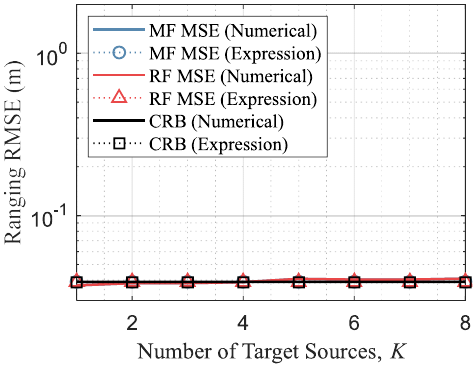}}\\
    \subfigure[]{\includegraphics[width=0.24\textwidth]{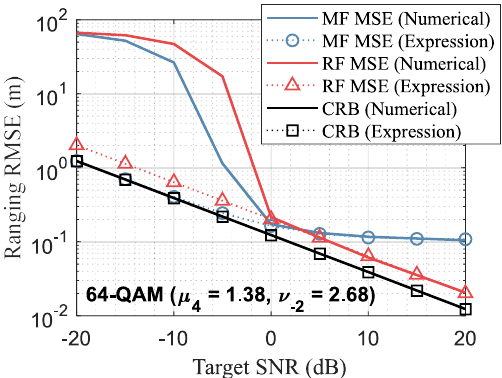}}
    \subfigure[]{\includegraphics[width=0.24\textwidth]{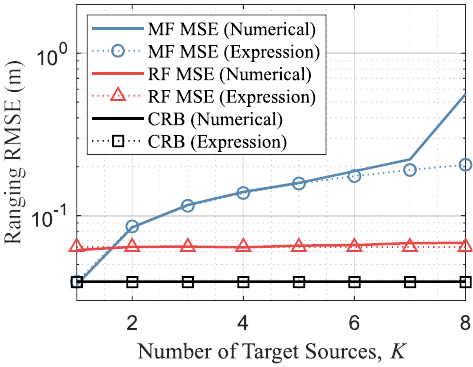}} \\
    \subfigure[]{\includegraphics[width=0.24\textwidth]{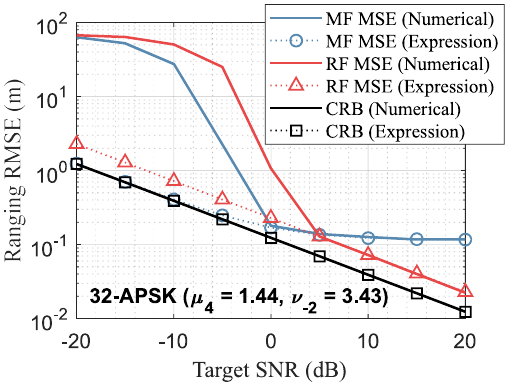}}
    \subfigure[]{\includegraphics[width=0.24\textwidth]{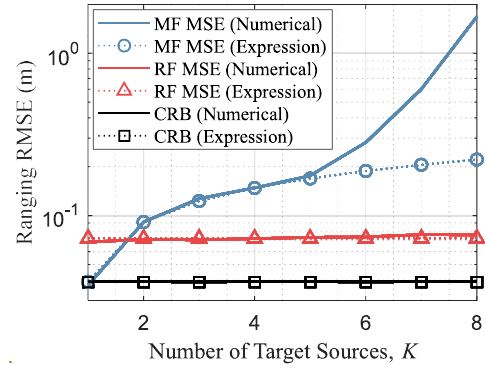}}
    \caption{Range estimation accuracy under three different modulation constellations. Subplots (a), (c), and (e) show the range estimation root-MSE (RMSE) versus target SNR with $K = 3$, while (b), (d), and (f) present the RMSE versus the number of targets at 10~dB SNR for QPSK, 64QAM, and 32APSK, respectively.}
    \label{Fig::1}
\end{figure}

For the numerical simulations, we adopt a CP-OFDM waveform with $N = 256$, $B = 50$~MHz, and a CP length of $0.64~\mu$s. Each result is averaged over 1000 Monte-Carlo runs. The target SNR is defined as $|\alpha_k|^2/\sigma^2$. A subspace-based matrix pencil estimator is employed for range estimation after MF and RF processing.  

First, we evaluate the range estimation accuracy under different modulation constellations to verify the presented theory and closed-form expressions with numerical results. As shown in Fig.~\ref{Fig::1}, the CRB remains consistent regardless of the modulation constellation, while the RMSE of MF- and RF-based receivers degrades with higher values of $\mu_4$ and $\nu_{-2}$, respectively. The derived closed-form expressions for the MSE closely match the numerical estimation results when the target SNR is larger than 0~dB. Furthermore, it is observed that the MSE of the MF receiver saturates as the SNR increases, since sidelobe interference from other targets limits the range estimation performance, whereas the RF-based receiver is free from such sidelobe effects.

As shown in Fig.~\ref{Fig::1}(b), (d), and (f), the impact of the number of targets $K$ on the range estimation performance is illustrated for both MF and RF receivers. For QPSK, which yields the optimal sensing performance, the accuracy remains constant regardless of $K$. In contrast, as the number of targets increases, the performance of MF degrades due to the growth of sidelobe interference induced by random ISAC signaling with $\mu_4 > 1$. On the other hand, the performance of RF remains constant, but it exhibits a gap from the CRB owing to the SNR loss when $\nu_{-2} > 1$. Overall, the presented closed-form expressions for the MSEs closely match the numerical evaluations, providing a tractable characterization of estimation performance for sensing with OFDM communication data payloads.

\subsection{Experimental Results}
\begin{figure}[t!]
    \centering
    {\includegraphics[width=0.48\textwidth]{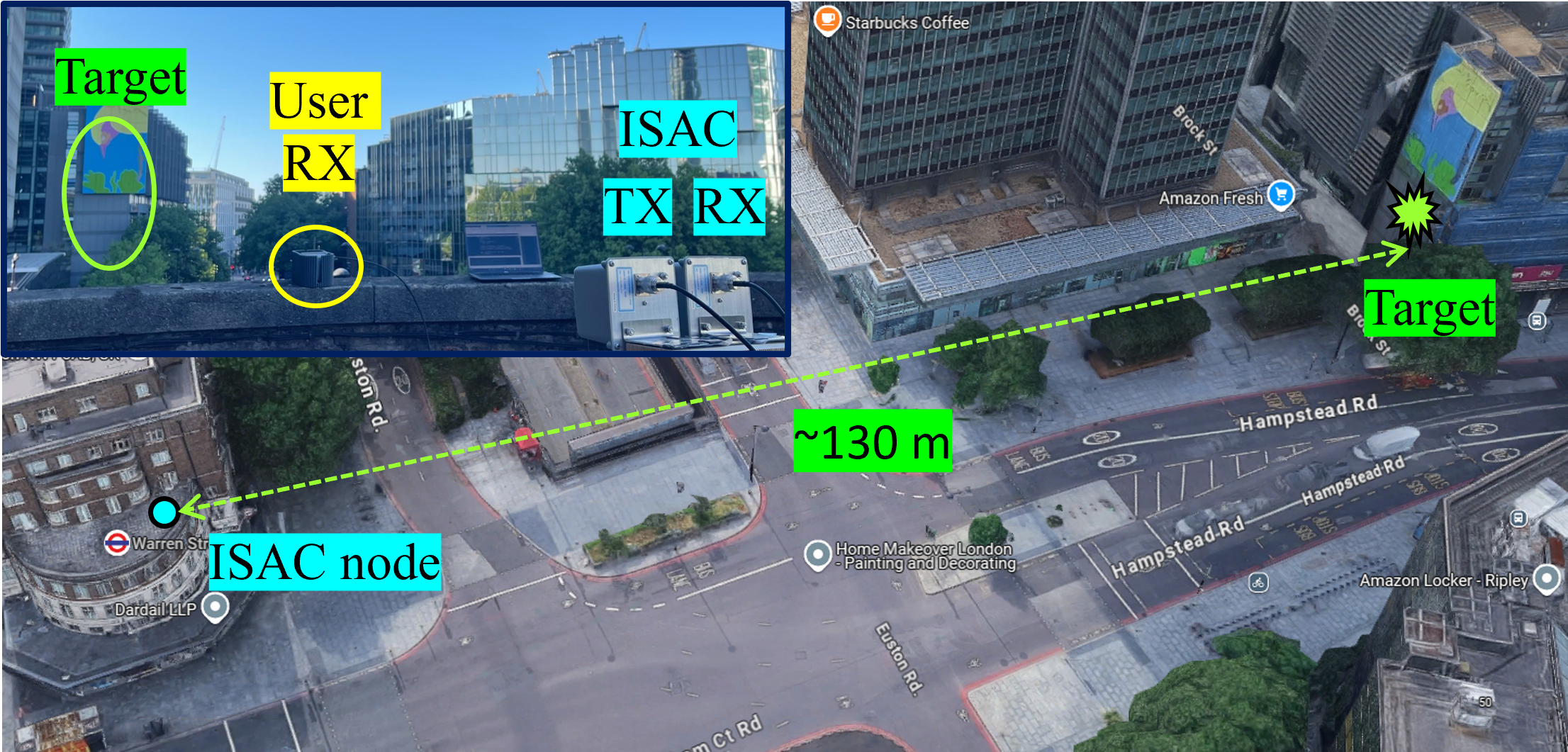}}
    \caption{Photograph of the measurement setup and corresponding Google Map of the London area \cite{Googlemap}.}
    \label{Fig::2}
\end{figure}

\begin{figure}[t!]
    \centering
    \subfigure[]{\includegraphics[width=0.45\textwidth]{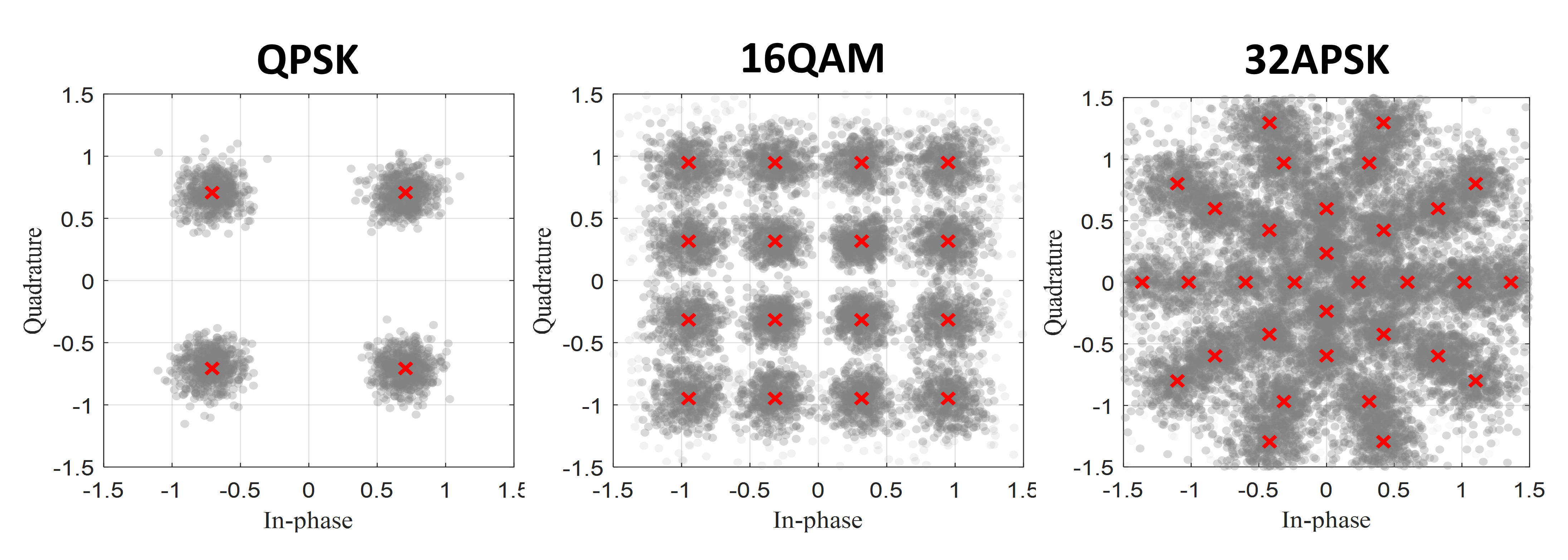}} \\
    \subfigure[]{\includegraphics[width=0.4\textwidth]{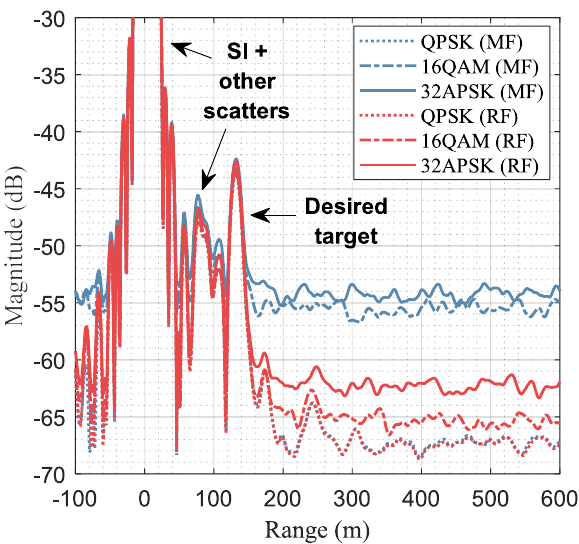}}
    \caption{Experimental results of OFDM-ISAC: (a) received symbols at the communication user and (b) range profiles at the sensing receiver.}
    \label{Fig::3}
\end{figure}

\begin{figure}[t!]
    \centering
    \subfigure[]{\includegraphics[width=0.5\textwidth]{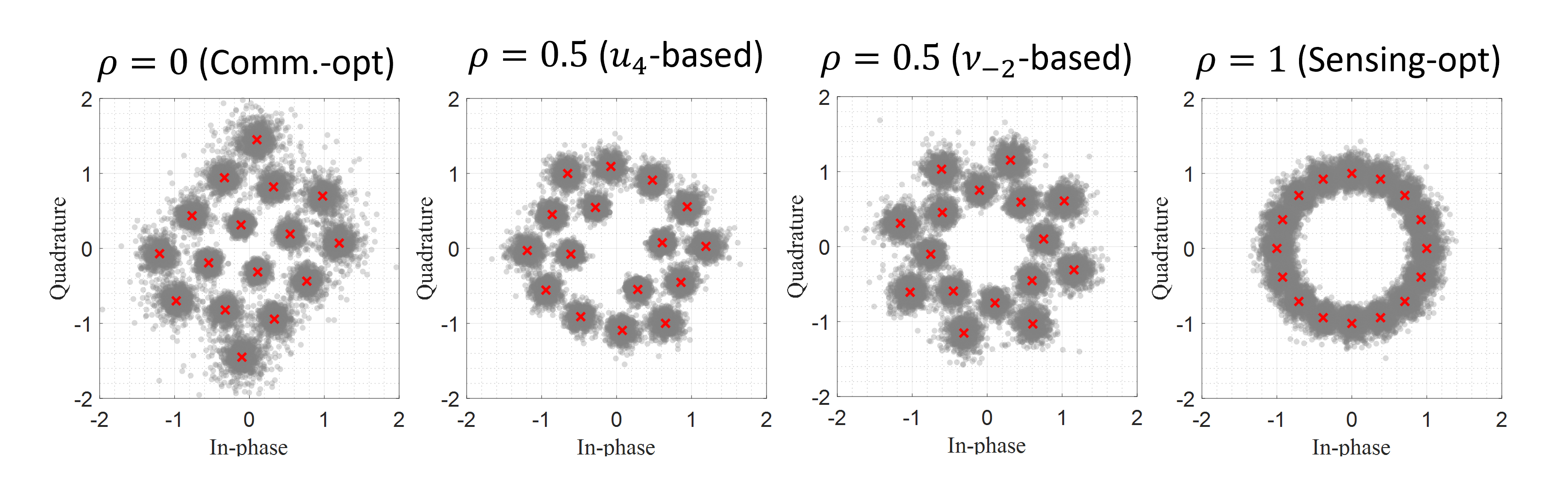}} \\
    \subfigure[]{\includegraphics[width=0.23\textwidth]{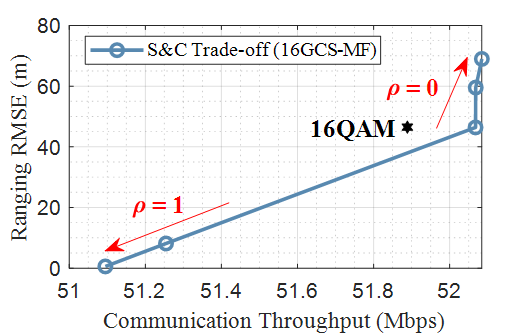}}
    \subfigure[]{\includegraphics[width=0.23\textwidth]{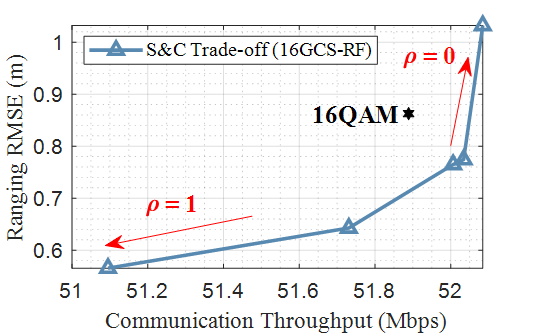}}
    \caption{Experimental results of the proposed constellation design for OFDM-ISAC: (a) received symbols at the communication user and ranging RMSE vs. communication throughput trade-offs under (b) MF and (c) RF sensing receiver.}
    \label{Fig::4}
\end{figure}

Beyond numerical simulations, we further validate the proposed analysis through over-the-air experiments. To this end, a software-defined radio (AD9363) with directional antennas of 12~dBi gain is employed. The desired target, a building wall with high reflectivity, is located approximately 130~m from the ISAC node. The transmitted OFDM-ISAC signal is simultaneously received at the user receiver, as illustrated in Fig.~\ref{Fig::2}. The center frequency is set to 2.4~GHz with a bandwidth of $B = 20$~MHz and $N = 512$ subcarriers. In addition, 512 OFDM symbols are transmitted per frame, providing a coherent processing gain of 27~dB for radar sensing. Each measurement result is obtained by averaging over 100 independent frames.

Fig.~\ref{Fig::3}(a) shows the received data symbols at the user for three different constellations (QPSK, 16QAM, and 32APSK), while Fig.~\ref{Fig::3}(b) illustrates the range profiles of MF and RF. Notably, the effective noise floor of RF is lower than that of MF except in the case of QPSK modulation. This is because strong self-interference between TX and RX, together with unexpected clutter sources at near ranges, generates high sidelobes in the MF output, thereby increasing the effective noise floor. In contrast, that of RF increases as the inverse second moment $\nu_{-2}$ of the constellation increases. It is worth noting that the RF receiver may outperform in the practical scenario with multiple scatterers in the case that the data payloads are modulated by the signal constellation with $\mu_4 > 1, \nu_{-2}>1$. This is because the effect of sidelobe interference increases as much as the number of scatterers, which greatly degrade the sensing performance as seen in Fig.~\ref{Fig::3}(b).

Finally, we show the experimental results with the designed constellation with $M=16$. Fig.~\ref{Fig::4}(a) shows the received signal constellation at user, which is designed upon varying priority weight $\rho$. Notably, $\mu_4$-based design for MF and $\nu_{-2}$-based design for RF yield different constellation geometry, implying that the ISAC constellation design has to consider which sensing receiver may be employed. Moreover, the experimental trade-off between sensing and communication is provided in Fig.~\ref{Fig::4}(b) and (c) for MF and RF receivers, respectively. Compared to conventional 16QAM constellation, the designed ISAC constellation enhances the ISAC performance with flexible control between sensing and communication functional priorities. 

\section{Conclusion}
In this paper, we have presented an analytical framework for OFDM-based ISAC. An estimation-theoretic approach was developed to characterize the range estimation performance under random ISAC signaling with two specific receiver processing schemes: MF and RF. Notably, the impact of the modulation constellation and the presence of multiple targets differ depending on the receiver processing, as demonstrated through theoretical analysis, numerical simulations, and PoC experiments. Moreover, the ISAC constellation design is presented and verified through experiments, showing range estimation RMSE vs. communication throughput trade-off under the specific sensing receiver. The proposed framework offers new insights into the design of OFDM-ISAC transmitters and receivers, opening new opportunity in ISAC transceiver design.

\appendices
\section{Proof of Theorem \ref{theo2}}\label{proof_theo2}
Let us define the first term related to sidelobe effects as $\mathrm{C}_{1}$ and the second term as $\mathrm{C}_{2}$ in \eqref{firstDev1}. As cross-products of $\mathrm{C}_{1}$ and $\mathrm{C}_{2}$ are zero-mean, we only need to consider the expectation of the squared magnitude of each term. The expectation of the squared magnitude of $\mathrm{C}_{1}$ is given by

\begin{align}\label{SLMSE}
    \mathbb{E}\left[|\mathrm{C}_{1}|^2\right] & = \sum_{j \ne k}^K |\alpha_j|^2 \left(\mu_4 \sum_{n=0}^{N-1} n^2 + \left| \sum_{n=0}^{N-1} n e^{j\phi_{j,n}} \right|^2 - \sum_{n=0}^{N-1} n^2
    \right) \notag \\
    & \approx (\mu_4 - 1)\sum_{j \ne k}^K |\alpha_j|^2  \sum_{n=0}^{N-1} n^2,
\end{align}
where \( \phi_{j,n} = 2\pi n \Delta f (\tau_k - \tau_j) \). Here, it should be noted that the cross-terms between different target delays $\left| \sum_{n=0}^{N-1} n e^{j\phi_{j,n}} \right|^2, \; \forall{j \neq k},$ are negligible as they are much smaller than $\sum_{n=0}^{N-1} n^2$ unless any two targets are closely located.
Also, the expectation of the squared magnitude of $\mathrm{C}_{2}$ is expressed as
\begin{align}\label{NoiseMSE}
    \mathbb{E}[|\mathrm{C}_{2}|^2] 
    &= \sum_{n=0}^{N-1} n^2 \, \mathbb{E}[|x_n^* z_n|^2] 
    = \sigma^2 \sum_{n=0}^{N-1} n^2.
\end{align}
Combining \eqref{SLMSE} and \eqref{NoiseMSE} together, we get
\begin{align} \label{firstexp}
    \mathbb{E}[|\dot{s}_{\text{MF}}(\tau_k)|^2] = (2\pi \Delta f)^2 \sum_{n=0}^{N-1} n^2 \left((\mu_4 - 1)\sum_{j \ne k}^K |\alpha_j|^2  + \sigma^2\right).
\end{align}

Similarly, the expectation of the second derivative at $\tau_k$ is obtained as
\begin{align} \label{secondexp}
    \mathbb{E}[\ddot{s}_{\text{MF}}(\tau_k)] = (j2\pi\Delta f)^{2}\,\alpha_k \sum_{n=0}^{N-1}n^2.
\end{align}
This is because the sidelobe interference can be ignored with sufficient delay difference $|\tau_k - \tau_j| \geq 1/B$ and the noise term has zero-mean. 
Plugging \eqref{firstexp} and \eqref{secondexp} into \eqref{MSE3} yields \eqref{Theo_eq2}, completing the proof.

\section{Proof of Theorem \ref{theo3}}\label{proof_theo3}
We derive the expectation of the squared magnitude of the first derivative and that of the second derivative respectively. 
\begin{align}
    \mathbb{E}\bigl[\lvert \dot{s}_{\rm RF}(\tau_k)\rvert^2\bigr]
    &= (2\pi\Delta f)^2\,\sum_{n=0}^{N-1}n^2\,\sigma^2\nu_{-2}, \\
    \left(\mathbb{E}\bigl[\ddot{s}_{\rm RF}(\tau_k)\bigr]\right)^2
    &= (j2\pi\Delta f)^4\,|\alpha_k|^2\,\left(\sum_{n=0}^{N-1}n^2\right)^2.
\end{align}
Plugging those into \eqref{MSE3} gives us
\begin{align}
    \frac{\mathbb{E}\bigl[\lvert \dot{s}_{\rm RF}(\tau_k)\rvert^2\bigr]}
         {2\bigl\lvert\mathbb{E}[\ddot{s}_{\rm RF}(\tau_k)]\bigr\rvert^2}
    = \frac{\sigma^2\,\nu_{-2}}
           {2\,\lvert\alpha_k\rvert^2\,(2\pi\Delta f)^2\,\sum_{n=0}^{N-1}n^2}. 
\end{align}
For sufficiently large $N$, it approximates to \eqref{Theo_eq3}, completing the proof.

\bibliographystyle{IEEEtran}
\bibliography{IEEEabrv,reference}
\end{document}